\date{}
\newtheorem{theorem}{Theorem}[section]
\newtheorem{lemma}{Lemma}[section]
\newenvironment{proof}[1][Proof]{\begin{trivlist}
\item[\hskip \labelsep {\bfseries #1}]}{\end{trivlist}}
\title{Distributed boundary tracking using alpha and Delaunay-\v{C}ech shapes}
\author{Harish Chintakunta\\
\small{North Carolina State University}\\
\small{hkchinta@ncsu.edu}
\and
Hamid Krim\\
\small{North Carolina State University}\\
\small{ahk@ncsu.edu}
}
\begin{document}
\maketitle

\begin{abstract}
For a given point set $S$ in a plane, we develop a distributed algorithm to compute the $\alpha-$shape of $S$. $\alpha-$shapes are well known geometric objects which generalize the idea of a convex hull, and provide a good definition for the shape of $S$. We assume that the distances between pairs of points which are closer than a certain distance $r>0$ are provided, and we show constructively that this information is sufficient to compute the alpha shapes for a range of parameters, where the range depends on $r$.

Such distributed algorithms are very useful in domains such as sensor networks, where each point represents a sensing node, the location of which is not necessarily known.

We also introduce a new geometric object called the Delaunay-\v{C}ech shape, which is geometrically more appropriate than an $\alpha-$shape for some cases, and show that it is topologically equivalent to $\alpha-$shapes.

\end{abstract}

\section{Introduction}

Many applications call for detecting and tracking the boundary of a dynamically changing space of interest \cite{5703088}\cite{chintakuntaCAMSAP2011}. We would expect any algorithm performing the task to include the following important properties: 1) the boundary output is geometrically close to the actual boundary,  and 2) the interior of the boundary is topologically faithful to the original space. It is often the case that we are only given random samples from the space. We may then reconstruct the space by first placing balls of a certain radius around these points, and then by taking the union of these balls. A good exposition on relationship amongst the sampling density, the geometry of the underlying space, and the radius of the balls may be found in \cite{niyogiSampling2009}. \\

In this paper, we start with the assumption that the union of the balls described above is a good approximation to the space of interest. Note that in some cases, this is by design. For example, in the case of systematic failures in sensor networks \cite{chintakuntaCAMSAP2011}, the failure in the nodes is caused by a spatially propagating phenomenon, and our aim is to track its boundary. In this case, we construct a space by taking the union of balls of radius $r_c/2$ around each node, where $r_c$ is its radius of communication. The radius of communication is the distance within which two nodes can communicate with each other. \\

The problem may also be viewed as one of computing the boundary of a set of points, provided with some geometric information. Given the binary information, about nodes pair-wise location within a certain distance, distributive algorithms exist to compute such a boundary \cite{5508237}. These algorithms are, unfortunately relatively slow, on account of the need for a global structure to reach a decision on the membership of a node or of an edge to the complex boundary. If on the other hand, we are provided with all pair-wise distances of nodes within a neighborhood, the above decision may be locally made by constructing an associated  $\alpha-$shape. \\

Given a set of points $S$ in a plane, the $\alpha-$shape introduced in \cite{edelsbrunner1983shape} gives a generalization of the convex hull of $S$, and an intuitive definition for the shape of points. More importantly, an $\alpha-$shape is the boundary of an alpha complex, which has the same topology as that of the union of balls. This relation amongst $\alpha-$shape, alpha complex and the union of balls is contingent on certain relations between their parameters. We discuss this in detail in Section \ref{sec: alpha-complex and alpha-shape}. Such topological guarantees cannot be provided by the boundary computed in \cite{5508237}. \\

The Delaunay triangulation gives sufficient information to compute the alpha complex, and hence its boundary, the $\alpha-$shape. If we only require the $\alpha-$shape, less information would be necessary.  The work in \cite{fayed2009localised} shows that a global Delaunay triangulation is not necessary, and the alpha shape can be computed using local Delaunay triangulations. Given the edge lengths of a geometric graph, \cite{fayed2009localised}  constructs the local Delaunay triangulation by first building local coordinates and then computing the Delaunay triangulation. Computing the local coordinates,  is however not robust and requires a high density of nodes for accuracy. 
When given the edge length information, we show that even local Delaunay triangulation is not necessary. \\

When there is a sufficient density of nodes, computing local coordinates is accurate (probabilistically), and distributed algorithms exist for computing modified versions of Delaunay triangulation \cite{ChenAvinThesis,li2003localized}. In this case, we define a certain \emph{Delaunay-\v{C}ech triangulation},  which contains an alpha complex, and which we show to be homotopy equivalent. For boundary tracking-based applications, the boundary of Delaunay-\v{C}ech triangulation will serve as a better geometric approximation to the boundary, while preserving the topological features. \\


Our contributions in this article are:
\begin{itemize}
\item Given the distances between  pairs of nodes whenever they are closer than $r_c>0$, we develop an algorithm to compute the $\alpha-$shape for a range of parameters, where this range depends on $r_c$.
\item We introduce the Delaunay-\v{C}ech triangulation, defined in Section \ref{sec:Restricted Delaunay Triangulation}, and show that it is homotopy equivalent to the alpha complex.
\end{itemize}

The remainder of the paper is organized as follows, in Section \ref{sec:Preliminaries}, we provide some background information, along with a formulation of the problem. We describe the distributed algorithm for computing an $\alpha-$shape in Section \ref{sec:computingAlphaShape}. The Delaunay-\v{C}ech triangulation and the Delaunay-\v{C}ech shape are defined in Section \ref{sec:Restricted Delaunay Triangulation}, while the proof of its  topological equivalence to the alpha complex is given in Section \ref{Relation between RDT(G) and  alpha- complex}. We conclude in Section \ref{sec:Conclusion} with some remarks.

\section{Preliminaries}
\label{sec:Preliminaries}

\subsection{Alpha complex and $\alpha-$shape}
\label{sec: alpha-complex and alpha-shape}
Consider a set of nodes  $V \subset \mathbb{R}^2$, and a parameter $r$. Let $V_i$ be the voronoi cell associated with node $v_i \in V$ in the voronoi decomposition of $V$.  Define an alpha cell ($\alpha-$cell) of $v_i$ as $\alpha(v_i,r) = V_i \cap B(v_i,r)$ where $B(v_i,r/2)$ is the closed ball of radius $r/2$ around $v_i$. The alpha complex, $A_r$ (we are assuming $V$ is implied in this notation), is defined as the nerve complex of the alpha cells, i.e.,  $(v_0,v_1,\ldots,v_k)$ spans a $k-$simplex in $A_r$ if $\bigcap_i \alpha(v_i) \ne \varnothing $. Since the alpha cells are convex, the nerve theorem \cite{bott1982lw,leray1945forme}  implies  that the alpha complex has the same homotopy type as the union of the alpha cells, which in turn is equal to the union of the balls $B(v_i,r/2)$. \\

Given a set of nodes $V \subset \mathbb{R}^2$ \footnote[2]{The alpha shape is generally defined  for points in $\mathbb{R}^k$ for any dimension $k$.}, and a parameter $r>0$, the alpha shape, $\partial A_r$, is a 1-dimensional complex which generalizes the convex hull of $V$. To simplify the notation, we use $(v_i,v_j)$ to denote an edge in a graph, a 1-simplex in a complex or the underlying line segment. A 1-simplex $(v_i,v_j)$  belongs to $\partial A_r$ if and only if a circle of radius $r/2$ passing through $v_i$ and $v_j$ does not contain any other node inside it. By ``inside'' a circle, we mean the interior of the ball to which this circle is a boundary.  We say that such a circle satisfies the ``\emph{$\alpha-$condition}''. $\partial A_r$ also contains all the nodes $\{v_j\}$ such that a circle of radius $r$ passing through  $v_j$ satisfies the $\alpha-$condition.   \\

For a 2-dimensional simplicial complex $K$, we define the boundary of $K$ to be the union of all the $1$-simplices (along with their faces), where each is a face of at most one $2-$simplex, and all $0-$simplices which are not faces of any simplex in $K$.  The alpha shape $\partial A_r$ is the boundary of the alpha complex $A_r$\cite{edelsbrunner1994three}.

\subsection{Delaunay-\v{C}ech Shape}
\label{sec:Restricted Delaunay Triangulation}
For a set of nodes $V \subset \mathbb{R}^2$ and a parameter $r>0$, define the geometric graph $G_{r} = (V,E)$ to be the set of vertices $(V)$ and edges $(E)$, where $e=(v_i,v_j)$ is in $E$ if the distance between $v_i$ and $v_j$ is less than or equal to $r$. Let $\check{C}(V,r)$ denote the \v{C}ech complex with parameter $r$ (the nerve complex of the set of  balls $\{B(v_i,r/2)\}$) and let $DT(V)$ be the Delaunay triangulation of $V$. We define the Delaunay-\v{C}ech complex $D\check{C}_r$ with parameter $r$ as $D\check{C}_r = DT(V)\cap\check{C}(V,r)$.  We will show in Section \ref{Relation between RDT(G) and  alpha- complex}, that $D\check{C}_r$ is homotopy equivalent to $A_r$. We call the boundary of $D\check{C}_r$, denoted by $\partial D\check{C}_r$ the Delaunay-\v{C}ech shape.

\section{Computing the alpha shape of points in $\mathbb{R}^2$}
\label{sec:computingAlphaShape}
In order to compute $\partial A_r$, we take each edge in $G_r$ and check if it is in $\partial A_r$. If an edge $e=(v_i,v_j)$ belongs to $\partial A_r$, then the length of the line segment $(v_i,v_j)$ is less than or equal to $r$. Otherwise, the $\alpha-$condition cannot be satisfied. Edge $e$ hence also belongs to $G_r$, and consequently, checking for all the edges in $G_r$ is sufficient to compute $\partial A_r$.\\

Given an edge $e=(v_i,v_j)$,  there are two circles of radius $r$ passing through $v_i$ and $v_j$. Let us call these circles $\mathcal{C}$ and $\mathcal{C}'$ (see Figure \ref{fig:cycleConstruction}). The $\alpha-$ condition is satisfied if and only if at most one of $\mathcal{C}$ and $\mathcal{C}'$ contains node(s) inside. \\

We consider all the nodes in $\mathcal{N}_i\cap\mathcal{N}_j$ (neighbors common to both $v_i$ and $v_j$) , and perform a series of tests to verify their location inside $\mathcal{C}$ and $\mathcal{C}'$. It is simple to see that considering nodes only in $\mathcal{N}_i\cap\mathcal{N}_j$ is sufficient. The diameter of both $\mathcal{C}$ and $\mathcal{C}'$ is $r$. If $v_k$ lies in one of the circles, the distance between $v_k$ and either of $v_i$ and $v_j$ is less than $r$, and hence, $v_k$ is a neighbor to both. \\

We now derive the following:
\begin{enumerate}
\item A test to see if a node lies in both  circles $\mathcal{C}$ and $\mathcal{C}'$. This immediately determines that $e$ does not belong to $\partial A_r$.
\item A test to see if a node lies in exactly one of the circles $\mathcal{C}$ and $\mathcal{C}'$.
\item Given that there exists at least one node in one of the circles, a test to see if a subsequent node lies in the other. This also, immediately determines that $e$ does not belong to $\partial A$.
\end{enumerate}

Let the angle subtended by the chord $v_iv_j$  on the bigger arc (of either circle, see Figure \ref{fig:cycleConstruction}) be $\theta$, hence making the angle subtended on the smaller arc  $\pi - \theta$.  The angle which the chord subtends at the center, $\omega$, may easily be computed using the law of cosines, and $\theta$ is equal to $\omega/2$.\\

Let $v_k \in \mathcal{N}_i \cap \mathcal{N}_j$, and  let $\angle{v_iv_kv_j}=\phi_k$. Then, if $\phi_k > \pi-\theta$, $v_k$ lies inside both  circles, and we immediately know that $e$ does not belong to $\partial A_r$.  If $\phi \le \theta$, it lies inside neither circle. Let $v_k$ be the first node satisfying $\theta < \phi \le \pi - \theta$. Then $v_k$ lies in one of the circle. Without loss of generality, we assume $v_k$ lies in $\mathcal{C}$. \\

Let $v_l$ be any subsequent node satisfying $\theta < \phi_l \le \pi - \theta$. If $v_l$ is not a neighbor of $v_k$, then $v_l$ lies in $\mathcal{C}'$, since any two nodes inside a circle of  diameter $r$ will be neighbors. If $v_k$ and $v_l$ are neighbors, we know the length $\|(v_kv_l)\|$. Using the law of cosines, we compute the angle $\angle{v_kv_iv_l}$ which we call $\beta$. If $\beta = \angle{v_kv_iv_j}+\angle{v_lv_iv_j}$, $v_l$ lies in $\mathcal{C}'$, and if $\beta = |\angle{v_kv_iv_j}-\angle{v_lv_iv_j}|$, $v_l$ lies in $\mathcal{C}$. Figure \ref{fig:angleRelationshipsForCycleSides} demonstrates this relationship between the angles.\\

\begin{table}[!ht]
\centering
\begin{tabular}{l}
\hline \\
computing the $\alpha-$shape\\
\hline\\
At each edge $e=(v_i,v_j)$ in $G$,  \\
\hspace{1cm} compute $\theta$ \\
\hspace{1cm} for each $v_k \in \mathcal{N}_i\cap \mathcal{N}_j$   \\
\hspace{1.5cm} compute $\phi_k$ \\
\hspace{1.5cm}if $\phi_k > \pi - \theta$ \\
\hspace{2cm}$e \not\in \partial A$, terminate \\
\hspace{1.5cm}if $\phi_k \le \theta$, \\
\hspace{2cm}continue to next node\\
\hspace{1.5cm}if $\theta < \phi \le \pi - \theta $ \\
\hspace{2cm}is $v_k$ the first node satisfying this condition? \\
\hspace{2.2cm}assign $v_k$ to $\mathcal{C}$ \\
\hspace{2cm}else  \\
\hspace{2.2cm}compute $\beta$ \\
\hspace{2.2cm}if $\beta = |\angle{v_kv_iv_j}-\angle{v_lv_iv_j}|$ \\
\hspace{2.4cm}continue to next node \\
\hspace{2.2cm}else\\
\hspace{2.4cm}$e\not\in \partial A$, terminate \\
\hspace{1cm} $e \in \partial A$ \\
\hline
\end{tabular}
\caption{Algorithm for computing the $\alpha-$shape. Note that all the computations require only local information.}
\label{Tab:AlgoForAlphaShape}
\end{table}

The algorithm terminates when we determine that both  circles $\mathcal{C}$ and $\mathcal{C}'$ contain at least one node, or there are no more nodes in $\mathcal{N}_i \cap \mathcal{N}_j$ to consider. In the former case, the edge $e$ does not belong to $\partial A_r$ and in the latter, $e$ belongs to $\partial A_r$. Clearly, we can use the same algorithm to compute the $\alpha-$shape for any parameter $0<q \le r$.  The algorithm is summarized in Table \ref{Tab:AlgoForAlphaShape}. Figure \ref{fig:alpha_complex_computation} shows   $\partial A_r$  for a set of points in $\mathbb{R}^2$ computed using the algorithm in Table \ref{Tab:AlgoForAlphaShape}.   The shaded region in the Figure is the union of balls of radius $r/2$ centered at each point. Note that the $\alpha-$shape is a boundary of an object which is homotopy equivalent to the shaded region.

\begin{figure}[!ht]
\centering
\subfigure[]{
\begin{tikzpicture}[scale=.8]
\draw[very thick,red] circle(2) (0,2) circle(2);
\foreach \x in {0.1}
\draw[fill] (0,0) circle(\x) (1.732,1) node[anchor=west]{$v_j$} circle(\x) (-1.732,1) circle(\x) node[anchor=east]{$v_i$};
\draw[fill] (-1,3) circle(0.1) node[anchor=south]{$v_k$};
\draw (-1.732,1) --  (1.732,1) -- (0,-2) -- (-1.732,1);
\draw (-1.732,1) -- (0,2) -- (1.732,1);
\draw[dashed] (-1.732,1) -- (-1,3) -- (1.732,1);
\begin{scope} \clip (-1.732,1)--(1.732,1)--(0,-2)--(-1.732,1); \draw (0,-2) circle (0.3); \end{scope}
\begin{scope} \clip (-1.732,1)--(0,2)--(1.732,1)--(-1.732,1); \draw (0,2) circle (0.3); \end{scope}
\begin{scope} \clip (-1.732,1)--(-1,3)--(1.732,1)--(-1.732,1); \draw (-1,3) circle (0.3); \end{scope}
\draw (0,-1.7) node[anchor=south]{$\theta$} (0,1.7) node[anchor=north]{$\pi - \theta$} (-1,3) node[anchor=north west]{$\phi_k$} ;
\draw (2,0) node[anchor=west]{$\mathcal{C}'$} (2,2) node[anchor=west]{$\mathcal{C}$};
\end{tikzpicture}
\label{fig:cycleConstruction} }
\qquad \qquad \qquad
\subfigure[]{
\begin{tikzpicture}[scale=0.8]
\draw[fill] circle(0.1) (2,0) circle(0.1) (0.6,2) circle(0.1) (1.4,2) circle(0.1) (1.4,-2) circle(0.1);
\draw[thick] (0,0) -- (2,0) -- (0.6,2) -- (0,0) -- (1.4,2) -- (2,0);
\draw[thick,dashed] (0,0) -- (1.4,-2) -- (2,0);
\begin{scope} \clip (1.4,2)--(0,0)--(0.6,2); \draw (0,0) circle (0.4); \end{scope}
\begin{scope} \clip (0.6,2)--(0,0)--(1.4,-2); \draw (0,0) circle (0.5); \end{scope}
\draw (0,0) node[anchor=east]{$v_i$} (2,0) node[anchor=west]{$v_j$} (0.6,2) node[anchor=south]{$v_k$} (1.4,2) node[anchor=south]{$v_l$} (1.4,-2) node[anchor=north]{$v_l$};
\draw (0.15,0.4) node[anchor=south west]{$\beta$} (0.4,-0.2) node[anchor=north west]{$\beta$};
\end{tikzpicture}
\label{fig:angleRelationshipsForCycleSides}
}
\caption{(a)shows the cycles $\mathcal{C}$ and $\mathcal{C}'$ which pass through $v_i$ and $v_j$. $\phi_k$ satisfies $\theta \le \phi \le \pi - \theta$ and $v_k$ lies in $\mathcal{C}$. (b)shows the angle relationships. When $v_l$ also lies in $\mathcal{C}$, then $\beta = |\angle{v_kv_iv_j}-\angle{v_lv_iv_j}|$, and if $v_l$ lies in $\mathcal{C}'$, $\beta = \angle{v_kv_iv_j}+\angle{v_lv_iv_j}$ }
\end{figure}
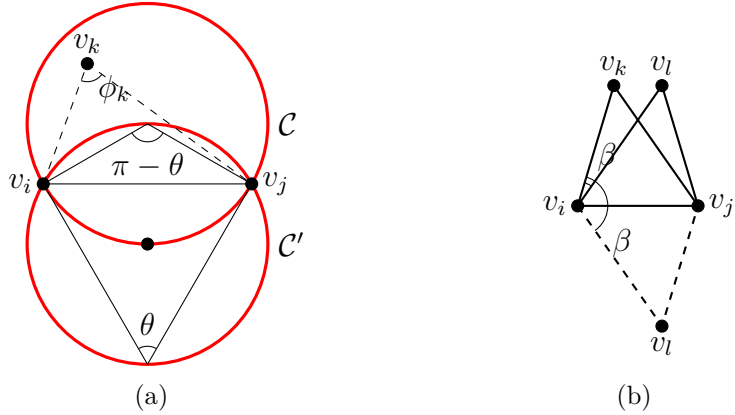

\begin{figure}[!ht]
\centering
\includegraphics[width=0.4\textwidth]{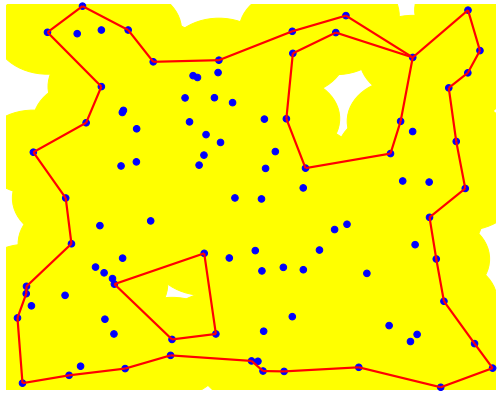}
\caption{$\alpha-$shape with parameter $r_c/2$ for a set of points in $\mathbb{R}^2$ computed using algorithm in Table \ref{Tab:AlgoForAlphaShape}. The shaded region is the union of balls of radius $r_c/2$ centered at each point.}
\label{fig:alpha_complex_computation}
\end{figure}

\section{Relation between $D\check{C}_r$ and  $A_r$}
\label{Relation between RDT(G) and  alpha- complex}

Consider the Delaunay-\v{C}ech complex $D\check{C}_r$ as defined in Section \ref{sec:Restricted Delaunay Triangulation}, and the $\alpha-$complex $A_r$, as defined in Section \ref{sec: alpha-complex and alpha-shape}. We will show that $D\check{C}_r$ has the same homotopy type as $A_r$, by showing that there exists a bijective pairing between the 1-simplices and 2-simplices in $D\check{C}_r\setminus A_r$, such that the pairing describes a homotopy collapse. Note that both $A_r$ and $D\check{C}_r$ do not contain any simplices of dimension greater than 2. Figure \ref{fig:homotopy collapse} shows such a homotopy collapse. \\

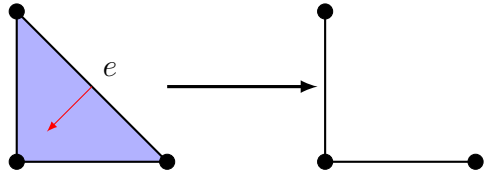
\begin{figure}[!ht]
\centering
\begin{tikzpicture}[>=latex]
\draw[fill, blue!30!white] (0,0) -- (2,0) -- (0,2) -- (0,0);
\draw[thick] (0,0) -- (2,0) -- (0,2) -- (0,0);
\draw[fill] circle(0.1) (2,0) circle(0.1) (0,2) circle(0.1);
\draw[red,->] (1,1) -- (0.4, 0.4);
\draw (1,1) node[anchor=south west]{$e$};
\draw[->,very thick] (2,1) -- (4,1);
\draw[thick,xshift=4.1cm] (0,0) -- (2,0); \draw[thick,xshift=4.1cm] (0,2) -- (0,0);
\draw[fill,xshift=4.1cm] circle(0.1) (2,0) circle(0.1) (0,2) circle(0.1);
\end{tikzpicture}
\caption{Homotopy collapse of an edge into an adjacent 2-simplex}
\label{fig:homotopy collapse}
\end{figure}

Let $F(G_r)$ be the flag complex of $G_r$. Define the complex $U_r$ as $U_r = DT(V) \cap F(G_r) $. Since the \v{C}ech complex $\check{C}(V,r)$ is a subcomplex of $F(G_r)$, $D\check{C}_r$ is a subcomplex of $U_r$. \\

Let $\mathcal{T}_e$ denote the set of all 2-simplices to which $e$ is a face in $U_r$, $\mathcal{T}_e^{r_c/2} \subseteq \mathcal{T}_e$ denote the 2-simplices in $\mathcal{T}_e$ with circum-radius less than or equal to $r/2$,  and $\mathcal{T}_e^{\pi/2} \subseteq \mathcal{T}_e$  denote the 2-simplices in $\mathcal{T}_e$ such that the angle opposite $e$ is greater than $\pi/2$. Figure \ref{fig:examples of triangle sets} shows examples of triangles with these properties.\\

\begin{figure}
\centering
\subfigure[$\tau \in \mathcal{T}_e^{r_c/2}$]{
\begin{tikzpicture}
\draw[fill,yellow!30!white] (0,0) circle(1.2) (2,0) circle(1.2) (1,1.732) circle(1.2);
\draw[yellow] (0,0) circle(1.2) (2,0) circle(1.2) (1,1.732) circle(1.2);
\draw[thick] (0,0) -- (2,0) -- (1,1.732) -- (0,0);
\draw[fill] (0,0) circle(.1) (2,0) circle(.1) (1,1.732) circle(.1);
\draw (1,0) node[anchor=north]{$e$} (1.5,0)node[anchor=south]{$\tau$};
\draw (1,1.732) -- (1,2.932); \draw (1,2.332) node[anchor=west]{$r_c/2$};
\end{tikzpicture}
\label{fig:triangle with small radius}
}
\hspace{1cm}
\subfigure[$\tau \in \mathcal{T}_e^{\pi/2}$]{
\begin{tikzpicture}
\draw[fill, blue!30!white] (0,0) -- (2,0) -- (-1,2) -- (0,0);
\draw[thick] (0,0) -- (2,0) -- (-1,2) -- (0,0);
\draw[fill] circle(0.1) (2,0) circle(0.1) (-1,2) circle(0.1);
\draw (0.6,1) node[anchor=south west]{$e$};
\draw (0,0.8) node{$\tau$};
\begin{scope} \clip (0,0) -- (2,0) -- (-1,2) -- (0,0); \draw circle(0.2); \end{scope}
\draw node[anchor=south west]{$>\pi/2$};
\end{tikzpicture}
\label{fig:obtuse angled triangle}
}
\caption{Examples of the triangle sets $\mathcal{T}_e^{r_c/2}$ and $\mathcal{T}_e^{\pi/2}$. The circles in (a) have are centered at the nodes with radius $r/2$. Since the circum-radius is less than $r/2$, the circles have a common intersection with positive area.}
\label{fig:examples of triangle sets}
\end{figure}
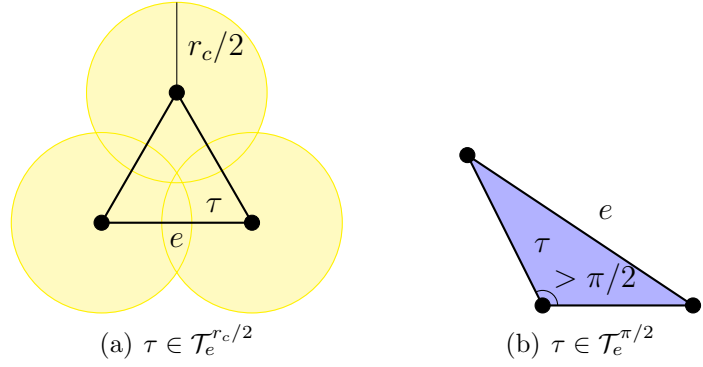

In order to show  the existence of a paring, we first analyze the triangles surrounding an edge $e \in A_r)$. The following lemma characterizes the 2-simplices in $A_r$ in terms of their circum-radius.

\begin{lemma}
\label{lem:triangleInAlpha}
A simplex $(v_1,v_2,v_3)$ in $U_r$ is in $A_r$ iff the circum-radius of the triangle $(v_1,v_2,v_3)$ \footnote[2]{we use the notation $(v_1,v_2,v_3)$ to denote both the simplex and the underlying triangle} is less than or equal to $r/2$.
\end{lemma}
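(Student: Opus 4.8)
The plan is to exploit the duality between the Delaunay triangulation and the Voronoi decomposition, which pins down the intersection of the three Voronoi cells of a Delaunay triangle exactly. Since $(v_1,v_2,v_3)\in U_r \subseteq DT(V)$, the triangle is a Delaunay triangle, and its circumcenter $c$ is precisely the Voronoi vertex dual to it. First I would record the geometric fact that $V_1\cap V_2\cap V_3 = \{c\}$. The inclusion $\supseteq$ is the defining property of a Delaunay triangle: $c$ is equidistant from $v_1,v_2,v_3$ at distance equal to the circum-radius $R$, and the empty-circumcircle condition guarantees that no other node is strictly closer to $c$, so $c$ lies in each of $V_1,V_2,V_3$. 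For $\subseteq$, any point in $V_1\cap V_2\cap V_3$ is equidistant from all three vertices, and in the plane the only such point (for non-collinear $v_1,v_2,v_3$) is the circumcenter; hence the triple intersection is exactly $\{c\}$.

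With this in hand, the alpha condition for the $2$-simplex unwinds directly. By definition $(v_1,v_2,v_3)\in A_r$ iff $\bigcap_i \alpha(v_i)\neq\varnothing$, and since $\alpha(v_i)=V_i\cap B(v_i,r/2)$,
\[
\bigcap_{i=1}^{3}\alpha(v_i)\;=\;\Bigl(\bigcap_{i=1}^{3}V_i\Bigr)\cap\bigcap_{i=1}^{3}B(v_i,r/2)\;=\;\{c\}\cap\bigcap_{i=1}^{3}B(v_i,r/2).
\]
This set is nonempty iff $c\in B(v_i,r/2)$ for every $i$, i.e. iff $\|c-v_i\|\le r/2$ for all $i$. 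But $c$ is the circumcenter, so $\|c-v_i\|=R$ for each $i$, and the condition collapses to the single inequality $R\le r/2$. This yields both directions of the claimed equivalence at once.

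The only genuinely load-bearing step is the duality fact $V_1\cap V_2\cap V_3=\{c\}$; once that is established the rest is a one-line set computation. I would therefore concentrate the care there, invoking the empty-circumcircle characterization of Delaunay simplices for the $\supseteq$ inclusion and the uniqueness of the equidistant point for the $\subseteq$ inclusion (assuming the nodes are in general position, so that the Delaunay triangulation is well defined and the circumcenter is unique). I expect no further obstacles: the containment $A_r\subseteq U_r$ makes the hypothesis $(v_1,v_2,v_3)\in U_r$ the natural ambient assumption, and no appeal to the $F(G_r)$ part of $U_r$ is needed beyond ensuring that we range over Delaunay triangles.
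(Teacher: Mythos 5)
Your proof is correct and takes essentially the same route as the paper's: both rest on the Delaunay--Voronoi duality placing the circumcenter $c$ in each cell $V_i$, so that the alpha cells have a common point precisely when $c$ lies in every ball $B(v_i,r/2)$, i.e.\ when the circum-radius is at most $r/2$. Your explicit identification $V_1\cap V_2\cap V_3=\{c\}$ merely spells out the converse direction that the paper's terser ``iff'' statement leaves implicit; it is a welcome clarification, not a different argument.
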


\begin{proof}
Since $U_r \subseteq DT(V)$, $(v_1,v_2,v_3) \in U_r \Rightarrow (v_1,v_2,v_3) \in DT(V) $. The circumradius is less than or equal to $r/2$, \emph{iff} circumcenter belongs to all  $\alpha-$cells $\alpha(v_1)$, $\alpha(v_2)$ and $\alpha(v_3)$. This results in the three $\alpha-$ cells having a non-empty intersection, hence $(v_1,v_2,v_3) \in A_r$. $\blacksquare$
\end{proof}

Lemmas \ref{lem:midpointWitness} and \ref{lem:conditionsForEdgeinAlpha} together impose conditions on cardinality of the sets $\mathcal{T}_e^{r_c/2}$ and $\mathcal{T}_e^{\pi/2}$. We utilize these conditions in Lemma \ref{theo:pairing theorem} to show the existence of the pairing.

\begin{lemma}
\label{lem:midpointWitness}
Denote by $m_e$ the midpoint of the 1-simplex $e \in U_r$. $m_e$ is a witness for $e$ iff  $\mathcal{T}_e^{\pi/2} = \varnothing$
\end{lemma}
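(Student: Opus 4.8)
The plan is to reinterpret the witness condition as a purely geometric statement about an empty disk, and then match that statement to the triangle set $\mathcal{T}_e^{\pi/2}$ via Thales' theorem together with a smallest-empty-circle argument. Writing $e=(v_i,v_j)$, the midpoint $m_e$ is equidistant from $v_i$ and $v_j$, both at distance $\|e\|/2$, so $m_e$ is a witness for $e$ precisely when $v_i$ and $v_j$ are the two nearest nodes of $V$ to $m_e$; equivalently, when no node of $V$ lies strictly inside the open ball $B(m_e,\|e\|/2)$, i.e. the ball having $e$ as a diameter. By Thales' theorem a node $v_k$ lies strictly inside this ball iff $\angle v_i v_k v_j > \pi/2$. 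Thus the lemma reduces to the equivalence: some node of $V$ subtends an obtuse angle on $e$ iff some triangle of $U_r$ on $e$ has its opposite angle obtuse.

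The direction $\mathcal{T}_e^{\pi/2}\ne\varnothing \Rightarrow m_e$ is not a witness is immediate: a triangle $(v_i,v_j,v_k)\in\mathcal{T}_e^{\pi/2}$ satisfies $\angle v_i v_k v_j > \pi/2$, so by the observation above $v_k$ lies strictly inside the diameter ball and $m_e$ fails to be a witness.

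For the converse I would argue contrapositively. Suppose a node $v_k$ lies strictly inside the diameter ball; take it, without loss of generality, on a fixed side of the line $v_i v_j$. Since $e\in U_r\subseteq DT(V)$, $e$ is a Delaunay edge, and because at least one node ($v_k$) lies on the chosen side, the dual Voronoi edge has an endpoint there, so $e$ carries a Delaunay triangle on that side. Place $m_e$ at the origin with $e$ horizontal, and parametrize the pencil of circles through $v_i,v_j$ by the signed height $c$ of the center along the perpendicular bisector of $e$; for a node $w$ on the chosen side let $f(w)$ be the height of the circle through $v_i,v_j,w$. A short computation gives $f(w)=\frac{x_w^2+y_w^2-(\|e\|/2)^2}{2y_w}$, so $w$ lies inside the diameter ball iff $f(w)<0$, while the Delaunay apex $w^{\ast}$ on that side is exactly the node minimizing $f$ (its circle through $e$ is the empty circumcircle, dual to the Voronoi vertex). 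Hence $f(w^{\ast})\le f(v_k)<0$, so $w^{\ast}$ too lies inside the diameter ball and $\angle v_i w^{\ast} v_j>\pi/2$. Finally, $w^{\ast}$ inside the diameter ball forces $\|w^{\ast}v_i\|,\|w^{\ast}v_j\|<\|e\|\le r$, so all three edges lie in $G_r$ and the triangle lies in $F(G_r)$; being a Delaunay triangle it is in $DT(V)$, hence in $U_r$, giving $(v_i,v_j,w^{\ast})\in\mathcal{T}_e^{\pi/2}$.

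The main obstacle is this converse, and in particular the claim that the empty-circumcircle (Delaunay) apex on a given side must itself lie inside the diameter ball whenever \emph{some} node does. The clean way around it is the monotone reformulation through $f$: minimizing the center-height over the nodes on one side simultaneously selects the Delaunay apex and preserves the sign that encodes diameter-ball membership, so obtuseness is inherited by the extremal (Delaunay) triangle. Degenerate, cocircular configurations are dispatched by the standard general-position assumption on $V$.
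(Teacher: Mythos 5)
Your proposal is correct, and it is strictly more careful than the paper's own proof, which is only two sentences: the paper observes that $m_e$ is a witness iff no node lies inside the circle with $e$ as diameter, and then asserts that ``this occurs if and only if the angle opposite $e$ in any incident triangle is acute.'' That assertion hides exactly the point you flagged as the main obstacle: Thales' theorem immediately converts ``node inside the diameter ball'' into ``some node of $V$ subtends an obtuse angle on $e$,'' but $\mathcal{T}_e^{\pi/2}$ only ranges over triangles of $U_r = DT(V)\cap F(G_r)$, so one must still show that an obtuse-subtending node forces an \emph{obtuse Delaunay triangle with short edges} on $e$. Your circle-pencil argument supplies this missing step: parametrizing circles through $v_i,v_j$ by center height $f(w)=\frac{x_w^2+y_w^2-(\|e\|/2)^2}{2y_w}$, the Delaunay apex on a side is the $f$-minimizer there, minimization preserves the sign $f<0$ that encodes diameter-ball membership, and membership in $F(G_r)$ follows since any point strictly inside the ball is at distance less than $\|e\|\le r$ from both endpoints. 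So where the paper buys brevity by treating the Thales equivalence as self-evident for ``incident triangles,'' your proof buys completeness: it is the only one of the two that actually verifies the nonempty side of the equivalence lands in $U_r$, which is also the fact implicitly needed later in the paper's Theorem 4.1 (there the obtuse triangle must lie in $D\check{C}_r$). The one cosmetic blemish is your opening paraphrase that $v_i,v_j$ are ``the two nearest nodes'' to $m_e$ — nearness should be non-strict, since a node exactly on the circle still leaves $m_e$ in both Voronoi cells — but you immediately restate the condition correctly as ``no node strictly inside the open ball,'' so nothing breaks.
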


\begin{proof}
$m_e$ is a witness for $e$ iff there does not exist any other node inside the circle with $e$ as the diameter (illustrated in Figure \ref{fig:midpointFigure}). This occurs if and only if the angle opposite $e$ in any incident triangle  is acute. $\blacksquare$
\end{proof}

\begin{figure}[!th]
\centering
\subfigure[Construction for Lemma \ref{lem:midpointWitness}.]{
\begin{tikzpicture}
\draw[very thick,red] circle(2);
\draw[fill] (-2,0) circle(0.1) (2,0) circle(0.1) (0,1) circle(0.1) (0.6, 3) circle(0.1) (0,0) circle(0.1);
\draw (-2,0) -- (2,0) -- (0,1) -- (-2,0); \draw (2,0) -- (0.6,3) -- (-2,0);
\draw (0,1) node[anchor=north]{$> \pi/2$} (0.6,3) node[anchor=north west]{$< \pi/2$} (0,0) node[anchor=north]{$m_e$} ;
\end{tikzpicture}
\label{fig:midpointFigure}
}
\hspace{2cm}
\subfigure[Construction for Lemma \ref{lem:onlyOneObtuseTriangle}]{
\begin{tikzpicture}
\draw[very thick,red] circle(2);
\draw[fill] (-1.732,-1) circle(.1) (1.732,-1) circle(.1) (0,-2) circle(.1) (0,2.5) circle(.1) ;
\draw (-1.732,-1) -- (0,-2) -- (1.732,-1) circle(.1)  -- (-1.732,-1) circle(.1) -- (0,2.5) -- (1.732,-1) circle(.1) ;
\draw (0,-2) node[anchor=south]{$\phi$}  (0,2.5) node[anchor=north west]{$< \pi - \phi$};
\draw (0,-1) node[anchor=south]{$e$} (0,2.5) node[anchor=south]{$v$} (0,0) node{$\tau_1$} (0,-1.2)node[anchor=north east]{$\tau$};
\end{tikzpicture}
\label{fig:onlyOneTriangle}
}
\caption{}
\end{figure}

\begin{lemma}
\label{lem:conditionsForEdgeinAlpha}
 Consider the following statements
\begin{itemize}
\item $S_1:$ $\mathcal{T}_e^{\pi/2} = \varnothing$
\item $S_2:$ $\mathcal{T}_e^{r/2} \ne \varnothing$
\item $S_3:$ $e \in A_r$
\end{itemize}

$S_1 \vee S_2 $ is a necessary and sufficient condition for $S_3$.

\end{lemma}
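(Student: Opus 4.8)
The plan is to prove the two implications $S_1 \vee S_2 \Rightarrow S_3$ and $S_3 \Rightarrow S_1 \vee S_2$ separately, the latter by contraposition. The unifying tool is a reformulation of membership in $A_r$: by definition $e = (v_i,v_j) \in A_r$ iff $\alpha(v_i) \cap \alpha(v_j) \neq \varnothing$, i.e. iff there is a point $x$ equidistant from $v_i$ and $v_j$, closer to them than to any other node, with $d(x,v_i) \le r/2$. Equivalently, $e \in A_r$ iff there is an empty circle (containing no node in its interior) of radius at most $r/2$ through $v_i$ and $v_j$. Since $e \in U_r \subseteq DT(V)$, the centres of the empty circles through $v_i, v_j$ fill the Voronoi edge dual to $e$, whose endpoints are the circumcentres of the Delaunay triangle(s) incident to $e$; along the perpendicular bisector the radius grows with distance from $m_e$. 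Hence $e \in A_r$ iff the smallest empty circle through $v_i, v_j$ has radius $\le r/2$.

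Sufficiency is then immediate from the two preceding lemmas. If $S_1$ holds, Lemma \ref{lem:midpointWitness} makes $m_e$ a witness, so $m_e$ lies on the Voronoi edge and the circle centred at $m_e$ through $v_i, v_j$ is empty; its radius is $\|e\|/2 \le r/2$ because $e \in G_r$, so $e \in A_r$. If $S_2$ holds, pick $\tau \in \mathcal{T}_e^{r/2}$; Lemma \ref{lem:triangleInAlpha} gives $\tau \in A_r$, and since $A_r$ is a simplicial complex and $e$ is a face of $\tau$, we get $e \in A_r$. This establishes $S_1 \vee S_2 \Rightarrow S_3$.

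For necessity I will assume $\neg S_1 \wedge \neg S_2$ and deduce $e \notin A_r$. By $\neg S_1$ and Lemma \ref{lem:midpointWitness}, $m_e$ is not a witness, so some node $w$ lies inside the circle with diameter $e$; as $e$ is a Delaunay edge its dual Voronoi edge is nonempty, and since $m_e$ cannot lie on it the whole Voronoi edge lies strictly to one side of $m_e$ along the bisector. Because the radius of the circle through $v_i, v_j$ centred at a bisector point is strictly increasing in its distance from $m_e$, the smallest empty circle through $v_i, v_j$ has radius equal to the minimum of the circumradii of the Delaunay triangles incident to $e$. It therefore suffices to show that every incident Delaunay triangle has circumradius $> r/2$. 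A triangle obtuse opposite $e$ has its apex inside the diameter circle, hence within distance $< \|e\| \le r$ of both $v_i$ and $v_j$, so it belongs to $U_r$; by $\neg S_2$ its circumradius exceeds $r/2$. For any other incident Delaunay triangle $\tau'$, if $\tau' \in U_r$ then $\neg S_2$ again gives circumradius $> r/2$, while if $\tau' \notin U_r$ then one of its edges has length $> r$ and, being a chord of the circumcircle, forces the circumradius to exceed $r/2$. Hence the smallest empty circle has radius $> r/2$ and $e \notin A_r$.

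The main obstacle, and the point requiring the most care, is that the hypotheses $S_1, S_2$ are phrased in terms of the truncated complex $U_r = DT(V) \cap F(G_r)$, whereas membership in $A_r$ is governed by the full Voronoi/Delaunay structure. The reconciliation rests on two observations I will justify cleanly: first, that a triangle obtuse opposite $e$ is automatically retained in $U_r$, so the triangle that pushes the Voronoi edge off $m_e$ is always counted by $S_2$; and second, that any Delaunay triangle discarded from $U_r$ carries an edge longer than $r$ and therefore cannot supply a small empty circle. I also need to treat the convex-hull case, where only one Delaunay triangle is incident and the Voronoi edge is a ray; there the same monotonicity argument applies with the single circumcentre as the finite endpoint, and $\neg S_1$ guarantees that lone triangle is obtuse, hence in $U_r$ with circumradius $> r/2$.
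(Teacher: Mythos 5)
Your proof is correct, and its sufficiency half is the paper's own argument: $S_1$ makes $m_e$ a witness via Lemma \ref{lem:midpointWitness}, and $S_2$ supplies a witness through Lemma \ref{lem:triangleInAlpha} (the paper points at the circumcenter, you invoke closure of $A_r$ under faces --- same content). Where you genuinely diverge is the necessity direction. The paper argues it directly and tersely: if $e\in A_r$ and $m_e$ is not a witness, it simply asserts that ``$e$ shares a witness with a $2$-simplex which is in $A_r$'' and then applies Lemma \ref{lem:triangleInAlpha}. You instead prove the contrapositive, and in doing so you make explicit the geometry that the paper's one-line assertion leaves implicit: the centers of empty circles through $v_i,v_j$ constitute the Voronoi edge dual to $e$, the radius is monotone along the bisector away from $m_e$, so once $\neg S_1$ forces $m_e$ off the Voronoi edge, the minimal empty-circle radius equals the circumradius of an incident Delaunay triangle; you then rule out \emph{all} incident Delaunay triangles, including those not in $U_r$, which you dispatch with the chord bound (an edge longer than $r$ forces circumradius $> r/2$). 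That last dichotomy addresses a point the paper never does: Lemma \ref{lem:triangleInAlpha} is stated only for simplices of $U_r$, so to use it one must check that the triangle inheriting the witness actually lies in $U_r$ (small circumradius forces all edges $\le r$); your chord argument is the missing justification, approached from the other side. What each route buys: the paper's direct argument is short and stays entirely at the level of witnesses, but rests on an unproved sliding step; yours is longer but self-contained, handles the convex-hull case (single incident triangle, Voronoi ray) explicitly, and would survive as a complete proof without the reader filling in Voronoi duality folklore.
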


\begin{proof}
Let $e=(v_1,v_2)$. For sufficiency: from Lemma \ref{lem:midpointWitness}, $S_1$ implies $m_e$ is a witness for $v_1$ and $v_2$, and $S_2$ implies $\exists$ a witness (the circumcenter of one of the triangles in $\mathcal{T}_e^{r/2}$). For necessity: if $S_3$ is true, then there exists a witness for $e$. If $m_e$ is a witness, then $S_1$ is true. If $m_e$ is not a witness, then $e$ shares a witness with a 2-simplex which is in $A_r$. From Lemma \ref{lem:triangleInAlpha}, this implies $\mathcal{T}_e^{r/2} \ne \varnothing $. Therefore, $S_2$ is true. $\blacksquare$
\end{proof}

The above Lemma suggests the existence or non-existence of types of triangles surrounding an edge in $A_r$. Lemma \ref{lem:onlyOneObtuseTriangle} and Theorem \ref{theo:pairing theorem} further refine this relationship, and precisely identify the triangle to be removed when an edge is removed from $D\check{C}_r$.

\begin{lemma}
\label{lem:onlyOneObtuseTriangle}
If $\mathcal{T}_e^{\pi/2} \ne \varnothing$, then $|\mathcal{T}_e^{\pi/2}| = 1$.
\end{lemma}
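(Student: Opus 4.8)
The plan is to exploit two facts: that $U_r$ is a subcomplex of the planar triangulation $DT(V)$, so each edge bounds at most two triangles, and that the empty-circumcircle property of $DT(V)$ constrains the two apex angles at an edge. Together these reduce the claim to ruling out a single configuration.

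First I would record that since $U_r \subseteq DT(V)$ and $DT(V)$ is a triangulation embedded in $\mathbb{R}^2$, the edge $e=(v_1,v_2)$ is a face of at most two $2$-simplices of $U_r$; when there are exactly two, their apices $v$ and $v'$ (the vertices opposite $e$) lie on opposite sides of the line through $e$. Thus $|\mathcal{T}_e|\le 2$, and it suffices to rule out the case $|\mathcal{T}_e^{\pi/2}|=2$. Next I would translate membership in $\mathcal{T}_e^{\pi/2}$ into a containment statement via Thales' theorem: the angle opposite $e$ in an incident triangle exceeds $\pi/2$ precisely when that triangle's apex lies strictly inside the circle $D_e$ having $e$ as a diameter. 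Hence $|\mathcal{T}_e^{\pi/2}|=2$ would force both apices $v$ and $v'$ to lie strictly inside $D_e$, one on each side of $e$, with $\angle v_1 v v_2 > \pi/2$ and $\angle v_1 v' v_2 > \pi/2$.

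To derive a contradiction I would invoke the defining empty-circumcircle property of $DT(V)$. Since the two triangles $(v_1,v_2,v)$ and $(v_1,v_2,v')$ both belong to $DT(V)$ and share the edge $e$, the edge is locally Delaunay, which by the inscribed-angle theorem is equivalent to the apex $v'$ not lying strictly inside the circumcircle of $(v_1,v_2,v)$, i.e. to $\angle v_1 v v_2 + \angle v_1 v' v_2 \le \pi$. But the two obtuse angles sum to strictly more than $\pi$, which places $v'$ strictly inside that circumcircle and contradicts the Delaunay condition. Therefore $|\mathcal{T}_e^{\pi/2}|\le 1$, and together with the hypothesis $\mathcal{T}_e^{\pi/2}\ne\varnothing$ we conclude $|\mathcal{T}_e^{\pi/2}|=1$.

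The main obstacle is stating cleanly the equivalence between the local Delaunay (empty-circumcircle) condition and the angle-sum bound $\angle v_1 v v_2 + \angle v_1 v' v_2 \le \pi$; once this inscribed-angle equivalence is in hand, the obtuse-angle hypothesis violates it immediately. A secondary point requiring care is the ``opposite sides'' observation, since it is exactly what guarantees that $v$ and $v'$ subtend $e$ on the two complementary arcs and hence makes the inscribed-angle comparison apply; without it the angle relation would not be the relevant one.
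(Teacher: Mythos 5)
Your proof is correct and follows essentially the same route as the paper's: both arguments rest on the Delaunay empty-circumcircle property, converted via the inscribed-angle theorem into the statement that the two angles opposite $e$ sum to at most $\pi$, which two obtuse angles would violate. The only difference is presentational --- you argue by contradiction after explicitly bounding $|\mathcal{T}_e|\le 2$, while the paper directly shows that any triangle incident on $e$ other than the obtuse one $\tau$ must have an angle opposite $e$ less than $\pi-\phi<\pi/2$.
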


\begin{proof}
Suppose $\mathcal{T}_e^{\pi/2} \ne \varnothing$. Let $\tau \in \mathcal{T}_e^{\pi/2}$, and let the angle opposite $e$ in $\tau$ be $\phi$ with $\phi>\pi/2$ (see Figure \ref{fig:onlyOneTriangle}). Let $\tau_1 \ne \tau$ be incident on $e$, with $v$ being the opposite vertex. Since $\tau \in U_r$, $v$ does not lie inside the circum-center of $\tau$. This implies that the angle opposite $e$ in $\tau_1$ is less than $\pi - \phi$ which is less than $\pi/2$. $\blacksquare$
\end{proof}

Let $C_k(K)$ denote the $k-$simplices in the complex $K$.

\begin{theorem}
\label{theo:pairing theorem}
Let $\mathcal{T}_R = C_2\left(D\check{C}_r\right)\setminus  C_2\left(A_r\right)$ and $\mathcal{E}_R = C_1\left(D\check{C}_r\right) \setminus C_1\left(A_r\right)$. There exists a bijective pairing $P: \mathcal{E}_R \rightarrow \mathcal{T}_R$ such that $e$ is a face of $P(e)$.
\end{theorem}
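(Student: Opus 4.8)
The plan is to define $P$ explicitly and then verify, in turn, that it lands in $\mathcal{T}_R$, that it is injective, and that it is surjective, using throughout the circumradius characterizations from Lemmas \ref{lem:triangleInAlpha}--\ref{lem:onlyOneObtuseTriangle}. For an edge $e \in \mathcal{E}_R$ we have $e \notin A_r$, so by Lemma \ref{lem:conditionsForEdgeinAlpha} both $S_1$ and $S_2$ fail for $e$; in particular $\mathcal{T}_e^{\pi/2} \neq \varnothing$, and by Lemma \ref{lem:onlyOneObtuseTriangle} this set is a singleton. I would therefore set $P(e)$ to be the unique triangle $\tau \in \mathcal{T}_e^{\pi/2}$, i.e.\ the unique triangle incident to $e$ whose angle opposite $e$ is obtuse. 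By construction $e$ is a face of $P(e)$, so the required face relation holds automatically.

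Next I would check $P(e) \in \mathcal{T}_R$. Since $\tau$ is obtuse with the obtuse angle opposite $e$, the edge $e$ is its longest side, and its midpoint $m_e$ lies within $r/2$ of all three vertices (the two endpoints at distance $\|e\|/2 \le r/2$, and the apex because it sits inside the circle having $e$ as diameter). Hence $\tau \in \check{C}(V,r)$, and since $\tau \in U_r \subseteq DT(V)$ we get $\tau \in D\check{C}_r$. Because $S_2$ fails for $e$, every triangle of $\mathcal{T}_e$ — in particular $\tau$ — has circumradius exceeding $r/2$, so Lemma \ref{lem:triangleInAlpha} gives $\tau \notin A_r$; thus $\tau \in \mathcal{T}_R$. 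Injectivity is then essentially free: a triangle has at most one obtuse angle, hence at most one edge opposite an obtuse angle, so $P(e_1) = P(e_2)$ forces $e_1 = e_2$.

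The substance lies in surjectivity. Given $\tau \in \mathcal{T}_R$, I would first argue $\tau$ must be obtuse: membership in $D\check{C}_r \subseteq \check{C}(V,r)$ forces its minimum enclosing radius to be at most $r/2$, while $\tau \notin A_r$ forces its circumradius above $r/2$ by Lemma \ref{lem:triangleInAlpha}; for acute or right triangles these two radii coincide, so only an obtuse triangle can separate them. Letting $e$ be the edge opposite the obtuse angle, $e$ is a face of $\tau \in D\check{C}_r$ and hence $e \in D\check{C}_r$; it remains to show $e \notin A_r$, after which $e \in \mathcal{E}_R$ and, by uniqueness in Lemma \ref{lem:onlyOneObtuseTriangle}, $P(e) = \tau$. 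By Lemma \ref{lem:conditionsForEdgeinAlpha} it suffices to rule out $S_1$ and $S_2$ for $e$, and $S_1$ fails immediately because $\tau \in \mathcal{T}_e^{\pi/2}$.

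The main obstacle is ruling out $S_2$, i.e.\ showing $\mathcal{T}_e^{r/2} = \varnothing$. In a Delaunay triangulation $e$ borders at most one other triangle $\tau'$ (the boundary case, with no $\tau'$, being trivial); its apex $w$ lies outside the circumcircle of $\tau$ by the Delaunay property, exactly as in Lemma \ref{lem:onlyOneObtuseTriangle}, so the angle $\psi$ at $w$ opposite $e$ satisfies $\psi < \pi - \phi$, where $\phi > \pi/2$ is the obtuse angle of $\tau$. A short law-of-sines estimate then propagates the circumradius bound: from $R = \|e\|/(2\sin\phi) > r/2$ we get $\sin\phi < \|e\|/r$, and since $\psi$ and $\pi-\phi$ both lie in $(0,\pi/2)$ where sine is increasing, $\sin\psi < \sin(\pi-\phi) = \sin\phi$, whence the neighbor's circumradius $R' = \|e\|/(2\sin\psi) > r/2$ as well. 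Thus no triangle incident to $e$ has circumradius $\le r/2$, $S_2$ fails, and $e \notin A_r$, completing surjectivity. This is the step where the Delaunay hypothesis is genuinely needed, and the delicate point I would present carefully is precisely that the extreme obtuseness required to push $R$ above $r/2$ simultaneously forces the neighbor's circumradius above $r/2$ through the monotonicity of $\sin$ on $(0,\pi/2)$.
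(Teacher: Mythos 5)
Your proposal is correct, and its core construction coincides with the paper's: both define $P(e)$ to be the unique triangle in $\mathcal{T}_e^{\pi/2}$ (well-defined by Lemmas \ref{lem:conditionsForEdgeinAlpha} and \ref{lem:onlyOneObtuseTriangle}), and both verify $P(e)\in\mathcal{T}_R$ the same way (circumradius $>r/2$ excludes it from $A_r$ by Lemma \ref{lem:triangleInAlpha}, while obtuseness places it in $\check{C}(V,r)$ via the midpoint of $e$). The difference is that the paper's proof stops there: it constructs the map and never addresses injectivity or surjectivity, so the claimed bijectivity is left unverified. You supply both missing halves. Injectivity is cheap (a triangle has at most one obtuse angle), but your surjectivity argument is genuine added content: you show every $\tau\in\mathcal{T}_R$ must be obtuse by playing the minimum enclosing radius (forced to be at most $r/2$ by membership in $\check{C}(V,r)$) against the circumradius (forced above $r/2$ by Lemma \ref{lem:triangleInAlpha}), these two radii coinciding for non-obtuse triangles; you then check that the edge $e$ opposite the obtuse angle actually lies in $\mathcal{E}_R$ by ruling out $S_2$, using the Delaunay empty-circumcircle property and the law-of-sines estimate $\sin\psi<\sin(\pi-\phi)=\sin\phi$ to push the neighboring triangle's circumradius above $r/2$ as well. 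That last step is precisely what the paper's argument would need and does not contain: without surjectivity one cannot conclude that collapsing all pairs $\left(e,P(e)\right)$ terminates exactly at $A_r$ rather than at some intermediate complex still containing unpaired triangles of $D\check{C}_r\setminus A_r$, which is what the subsequent homotopy-equivalence theorem relies on. Your write-up is therefore not merely correct but strictly more complete than the paper's own proof.
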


\begin{proof}
Let $e\in D\check{C}_r$ but $e \not\in A_r$, from  Lemma \ref{lem:conditionsForEdgeinAlpha}, $\mathcal{T}_e^{r_c/2}=\varnothing$ and $\mathcal{T}_e^{\pi/2} \ne \varnothing$. Since we assume $e\not\in A_r$, $e$ cannot be a face of any 2-simplex in $A_r$. Owing to the condition $\mathcal{T}_e^{r_c/2}=\varnothing$, Lemma \ref{lem:triangleInAlpha} ensures that this is indeed the case.    Also, from Lemma \ref{lem:onlyOneObtuseTriangle}, $|\mathcal{T}_e^{\pi/2}| = 1$. Let $\tau \in \mathcal{T}_e^{\pi/2}$. Note that $\tau$ is unique, and $\tau \not\in A_r$. Further, since $\tau$ is an obtuse triangle, $\tau \in \check{C}(V,r)$, and this implies $ \tau \in D\check{C}_r$. The pairing $P$ is then defined as $P(e) = \tau$. $\blacksquare$
\end{proof}

%

For any simplicial complex $K$, let $\sigma_1$ and $\sigma_2$ be simplices of dimension $1$ and $2$ such that $\sigma_1$ is a face of $\sigma_2$. Then, there exists a deformation retraction $F_{\sigma_1}: K \rightarrow K \setminus (\sigma_1 \cup Int(\sigma_2))$, which ``collapses'' $\sigma_1$ into $\sigma_{2}$. Therefore, $K$ is homotopy equivalent to $K \setminus (\sigma_1 \cup Int(\sigma_{2}))$.\\

The removal of edges $\mathcal{E}_R$ and triangles $\mathcal{T}_R$ describes a finite sequence of deformation retractions via the pairing $P$. When we collapse all the edges into their paired triangles, the resulting complex is $A_r$. Each collapse is a homotopy equivalence, and a composition of homotopy equivalences is a homotopy equivalence. This leads us to our main theorem:

\begin{theorem}
The complexes $D\check{C}_r$ and $A_r$ are homotopy equivalent.
\end{theorem}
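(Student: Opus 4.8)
The plan is to realize the passage from $D\check{C}_r$ down to $A_r$ as a finite sequence of elementary collapses, each of which is a deformation retraction and hence a homotopy equivalence, and then invoke the fact that a composition of homotopy equivalences is one. First I would record that $A_r$ is genuinely a subcomplex of $D\check{C}_r$, so that the set difference is well defined: every simplex of $A_r$ lies in $DT(V)$, and since each $\alpha$-cell $\alpha(v_i,r)$ is contained in the ball $B(v_i,r/2)$, a non-empty intersection of $\alpha$-cells forces a non-empty intersection of the corresponding balls, so $A_r \subseteq \check{C}(V,r)$ as well; thus $A_r \subseteq DT(V)\cap\check{C}(V,r)=D\check{C}_r$. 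Consequently $C_0(D\check{C}_r)=C_0(A_r)$, and the only cells of $D\check{C}_r$ outside $A_r$ are the edges $\mathcal{E}_R$ and the triangles $\mathcal{T}_R$, which Theorem~\ref{theo:pairing theorem} matches bijectively through $P$ with $e$ a face of $P(e)$.

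The single-collapse paragraph preceding the theorem is valid only when $\sigma_1$ is a \emph{free} face of $\sigma_2$, i.e.\ a face of no other $2$-simplex of the current complex; this is exactly the point that needs care, because in $D\check{C}_r$ an edge may be a face of both Delaunay triangles sharing it, and both may lie in $\mathcal{T}_R$. The key observation I would use to resolve this is that every $\tau\in\mathcal{T}_R$ is \emph{obtuse} and that $P^{-1}(\tau)$ is precisely its longest edge. Indeed, $\tau\in D\check{C}_r$ gives minimum enclosing radius $\le r/2$, while $\tau\notin A_r$ gives circumradius $>r/2$ by Lemma~\ref{lem:triangleInAlpha}; the minimum enclosing radius is strictly below the circumradius only for an obtuse triangle, and the proof of Theorem~\ref{theo:pairing theorem} identifies $e=P^{-1}(\tau)$ with the edge opposite the obtuse angle, which is the longest side.

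With this in hand I would order the pairs $(e,P(e))$ by strictly decreasing edge length $\|e\|$ (invoking general position, already implicit in the uniqueness of $DT(V)$, so that all pairwise distances are distinct) and perform the collapses in that order. At the step for $(e,\tau)$ I must check that $e$ is free: any triangle of $A_r$ containing $e$ would force $e\in A_r$, contradicting $e\in\mathcal{E}_R$, so every triangle of $D\check{C}_r$ incident to $e$ other than $\tau$ lies in $\mathcal{T}_R$; such a triangle $\tau'$ has $e$ as a non-longest edge, whence its matched edge $e'=P^{-1}(\tau')$ satisfies $\|e'\|>\|e\|$, so the pair $(e',\tau')$ was processed earlier and $\tau'$ is already gone. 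Hence at its step $e$ is a face of the single remaining $2$-simplex $\tau$, the collapse is a genuine elementary collapse, and the resulting deformation retraction is a homotopy equivalence. Composing all of them removes exactly $\mathcal{E}_R\cup\mathcal{T}_R$ and yields $A_r$; equivalently, the same ordering shows the matching $P$ is acyclic, so the conclusion alternatively follows from discrete Morse theory.

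The main obstacle, and the step on which I would spend the most care, is precisely this freeness/acyclicity verification: the bijective pairing alone does not guarantee collapsibility, and without the ``longest-edge'' characterization of the matched triangles one cannot exclude a cyclic dependence among the pairs that would obstruct any valid collapse order. The remaining ingredients — the subcomplex containment $A_r\subseteq D\check{C}_r$ and the stability of homotopy equivalence under composition — are routine.
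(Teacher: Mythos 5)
Your proposal is correct, and its overall skeleton is the same as the paper's: take the bijective pairing $P:\mathcal{E}_R\to\mathcal{T}_R$ of Theorem~\ref{theo:pairing theorem}, collapse each edge into its paired triangle, and compose the resulting homotopy equivalences. But you supply a verification that the paper omits, and the omission is a real one. The paper's penultimate paragraph asserts that whenever $\sigma_1$ is a face of $\sigma_2$ there is a deformation retraction of $K$ onto $K\setminus(\sigma_1\cup Int(\sigma_2))$; this is false unless $\sigma_1$ is a \emph{free} face (take $K$ to be the boundary of a tetrahedron, $e$ an edge and $\tau$ an incident triangle: the complement of $e\cup Int(\tau)$ is contractible while $K\simeq S^2$ is not). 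Moreover, a bijective face-pairing alone does not guarantee any valid collapse order exists --- a cyclic dependence among pairs can obstruct every schedule --- so the paper's claim that the pairing ``describes a finite sequence of deformation retractions'' is exactly the unproved step. Your argument closes it: every $\tau\in\mathcal{T}_R$ is obtuse (minimum enclosing radius $\le r/2$ from $\tau\in\check{C}(V,r)$, circumradius $>r/2$ from Lemma~\ref{lem:triangleInAlpha}), its matched edge $P^{-1}(\tau)$ is the edge opposite the obtuse angle, and processing pairs in decreasing edge length makes each edge free at its turn, since any other triangle of $D\check{C}_r$ incident to $e$ must lie in $\mathcal{T}_R$ (otherwise $e\in A_r$) and, by the uniqueness in Lemma~\ref{lem:onlyOneObtuseTriangle}, has $e$ as a non-longest edge, hence a strictly longer matched edge, hence was removed earlier. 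This is precisely a discrete-Morse acyclicity certificate, and it is what makes the theorem's proof complete. One small correction: your appeal to general position is unnecessary and slightly misplaced --- the edge opposite an obtuse angle is \emph{strictly} longest, so $\|P^{-1}(\tau')\|>\|e\|$ holds automatically and ties between unrelated pairs can be broken arbitrarily; also, uniqueness of $DT(V)$ reflects a no-four-cocircular-points assumption, not distinctness of pairwise distances.
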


Figure \ref{fig:homotopyEquivalence} illustrates the above theorem using an example. Note that   $A_r$ and  $D\check{C}_r$ are homotopy equivalent to each other and both are homotopy equivalent to $R_c$ (the shaded region). Further, as seen, $D\check{C}_r$ is a better geometric approximation to $R_c$ than $A_r$. This is simply because $A_r$ is a sub-complex of $D\check{C}_r$.

\begin{figure}
\centering
\subfigure[$A_{r_c/2}(V)$]{
\includegraphics[width=0.3\textwidth]{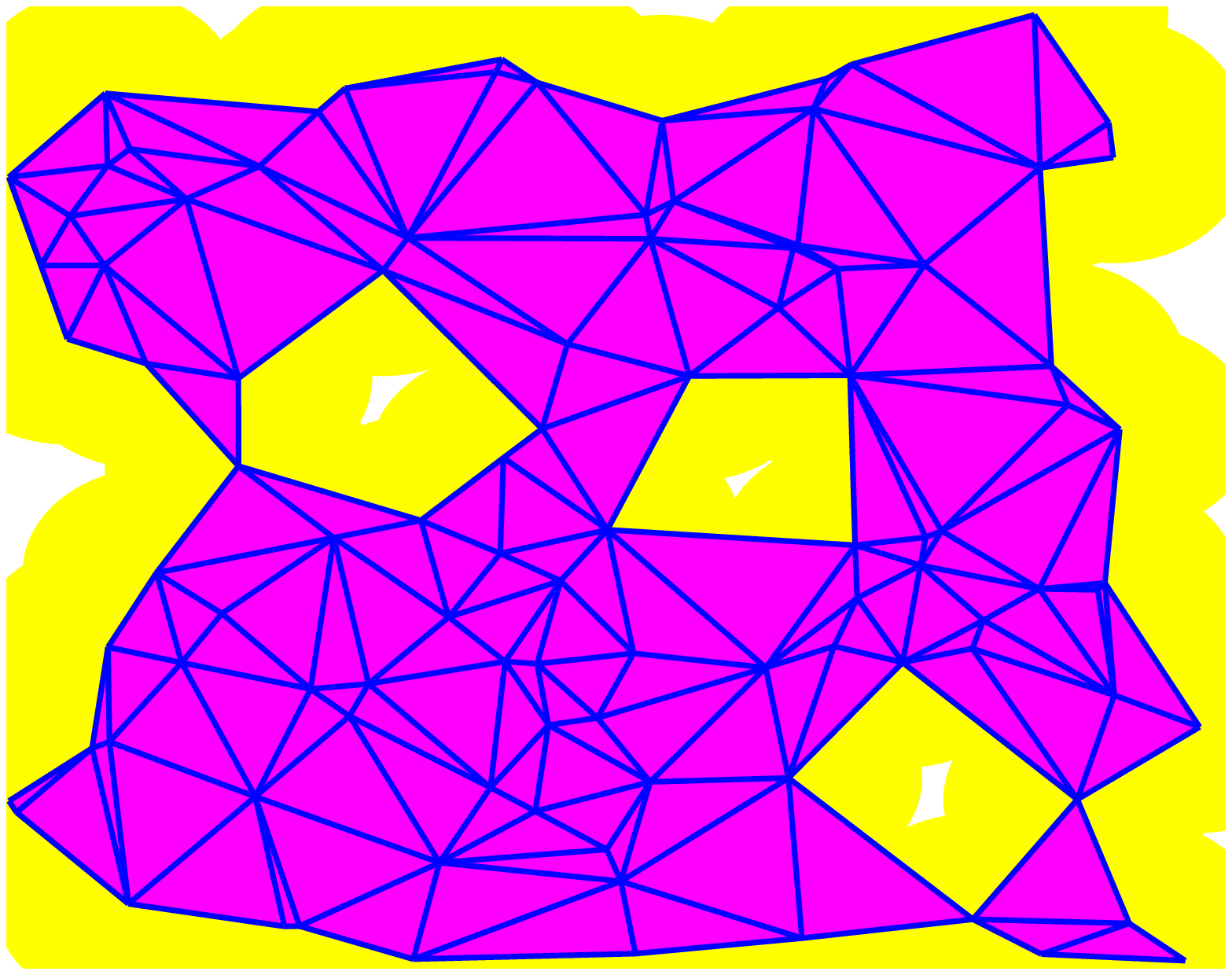}
}
\subfigure[$D\check{C}_{r_c/2}(V)$]{
\includegraphics[width=0.3\textwidth]{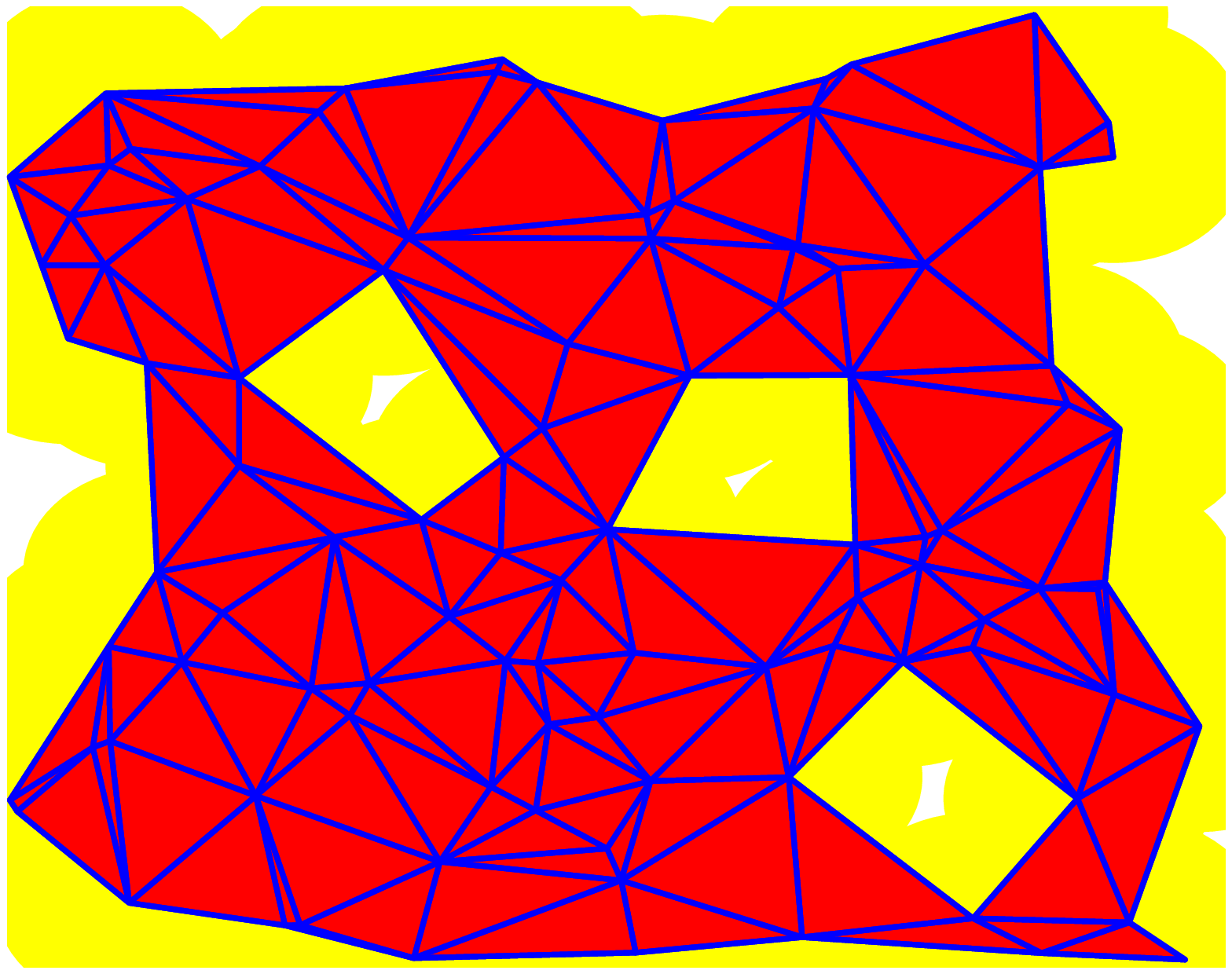}
}
\subfigure[$A_{r_c/2}(V)$ super-imposed over $D\check{C}_{r_c/2}(V)$]{
\includegraphics[width=0.3\textwidth]{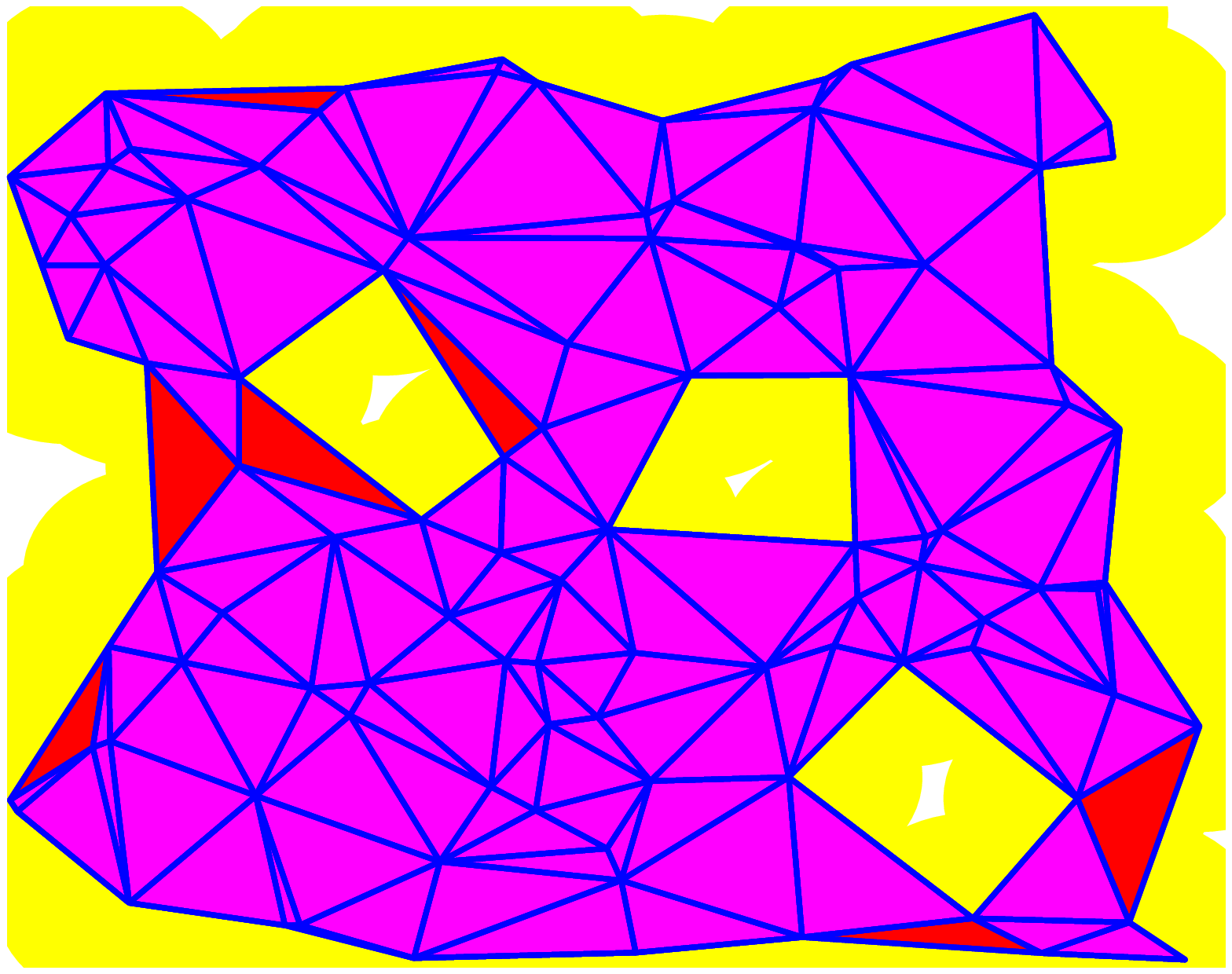}
}
\caption{Figure shows the homotopy equivalence between $A_{r_c/2}(V)$ and $D\check{C}_{r_c/2}(V)$. The shaded region is $R_c$. Note that $D\check{C}_{r_c/2}(V)$ is a better geometric approximation to $R_c$ than $A_{r_c/2}(V)$. }
\label{fig:homotopyEquivalence}
\end{figure}


\section{Conclusion}
\label{sec:Conclusion}
The algorithm described in Section \ref{sec:computingAlphaShape} takes the edge lengths as inputs and outputs the alpha shapes. We make no further assumptions on the node density, and we need not compute any  coordinates. The decision about an edge belonging to an $\alpha-$shape is carried out by only looking at the local information, i.e., considering only the points within a certain distance, and may therefore be implemented distributively. In Section \ref{sec:Restricted Delaunay Triangulation}, we define the Delaunay-\v{C}ech complex which contains the alpha complex. Its boundary, defined as a Delaunay-\v{C}ech shape, is therefore a better geometric approximation for the union of balls with an appropriate radius. We also show in Section \ref{Relation between RDT(G) and  alpha- complex} that, like the $\alpha-$shape, the Delaunay-\v{C}ech shape remains topologically faithful to the underlying space.\\

\bibliographystyle{plain}
\bibliography{theBigBib}

\end{document}